\newtheoremstyle{nonitalic}
{3pt}
{3pt}
{\normalfont}
{}
{\bfseries}
{.}
{.5em}
{}
\theoremstyle{nonitalic}
\newtheorem{theorem}{Theorem}
\newtheorem{remark}{Remark}
\newcommand\numberthis{\addtocounter{equation}{1}\tag{\theequation}}
\def\BibTeX{{\rm B\kern-.05em{\sc i\kern-.025em b}\kern-.08em
    T\kern-.1667em\lower.7ex\hbox{E}\kern-.125emX}}
\newcommand*{\rom}[1]{\expandafter\@slowromancap\romannumeral #1@}
\begin{document}

\title{\huge Age of Actuated Information and Age of Actuation in a Data-Caching Energy Harvesting Actuator}

\author{
	\IEEEauthorblockN{Ali Nikkhah\IEEEauthorrefmark{1}, 
	Anthony Ephremides\IEEEauthorrefmark{2}, and Nikolaos Pappas\IEEEauthorrefmark{1}\\}
    \IEEEauthorblockA{\IEEEauthorrefmark{1}Department of Computer and Information Science, Link\"{o}ping University, Link\"{o}ping, Sweden}
    \IEEEauthorblockA{\IEEEauthorrefmark{2}Electrical and Computer Engineering, University of Maryland, College Park, MD, USA}
   \{ali.nikkhah, nikolaos.pappas\}@liu.se, etony@umd.edu
   \thanks{This work has been supported in part by the Swedish Research Council (VR), ELLIIT, Zenith, and the European Union (ETHER, 101096526).}
}

\maketitle
\begin{abstract} 
In this paper, we introduce two metrics, namely, \textit{age of actuation (AoA)} and \textit{age of actuated information (AoAI)}, within a discrete-time system model that integrates data caching and energy harvesting (EH). AoA evaluates the timeliness of actions irrespective of the age of the information, while AoAI considers the freshness of the utilized data packet. We use Markov Chain analysis to model the system's evolution. Furthermore, we employ three-dimensional Markov Chain analysis to characterize the stationary distributions for AoA and AoAI and calculate their average values. Our findings from the analysis, validated by simulations, show that while AoAI consistently decreases with increased data and energy packet arrival rates, AoA presents a more complex behavior, with potential increases under conditions of limited data or energy resources. These metrics go towards the semantics of information and goal-oriented communications since they consider the timeliness of utilizing the information to perform an action.
\end{abstract}

\section{Introduction}
Recent advancements in communication semantics \cite{gunduz2022beyond,kountouris2021semantics,lu2023semantics} have marked a pivotal shift from traditional communication paradigms, which primarily focused on the transmission process between a sender and a receiver without considering the contextual significance of the transmitted information. In conventional views, performance metrics remained indifferent to the underlying purpose of the transmitted bits. 

While not inherently contextual, age of information (AoI) \cite{pappas2023age} has been a proxy metric towards semantics of information. AoI quantifies the timeliness of data from a source at its destination, offering a foundation upon which semantic aspects of the source and the destination can be integrated by extending this metric. By semantics, we consider the relevance and importance of information in achieving a specific goal. 

With the proliferation of smaller and autonomous devices, energy harvesting (EH) \cite{sudevalayam2011energy} is considered vital for sustaining the operation of these devices. In this regard, in AoI-related EH works, there are generally two categories of work. The first involves controlled energy sources, such as wireless power transfer systems \cite{nikkhah2023age,nikkhah2023ageo,krikidis2019average,ibrahim2016stability,leng2019ageof}, where the energy source is actively managed within the system design. The second stream focuses on opportunistically harvesting ambient energy, a category our current work belongs to. There are two types of studies within the latter category: continuous-time EH is often represented using a Poisson process \cite{feng2018minimizing, wu2017optimal, YatesISIT2015, zheng2019closed, gindullina2021age, bacingolu2019optimal, elmagid2022age, arafa2019age}, whereas discrete-time EH models typically utilize a Bernoulli process \cite{jia2021age,baknina2018sending, pappas2020average, hatami2021aoi, chen2021optimization, CeranAoI2019,xu2023optimal}, which is the approach of our study.

This paper integrates the semantics of the source and destination into communication system designs by focusing on the timeliness of actions driven by data and energy packets. We want to emphasize that generating more frequent updates will not always lead to better system performance,  as we also demonstrate when considering the entire communication system from information generation to reception and actuation.

In our previous works \cite{nikkhah2023age, nikkhah2023ageo}, we analyzed the timeliness of actions in a system model where actions occur via instantaneous data packets. In this paper, we address scenarios where, upon the arrival of a data packet, when insufficient energy prevents immediate action, the actuator caches the data packet. Consequently, when an action is eventually performed, the triggering data may vary in age, concluding that different actuations will not be identical regarding timing.

This paper extends our previous definition of the \textit{age of actuation (AoA)} to analyze the timeliness of actions, independent of the age of the data packet triggering the action. Furthermore, we propose a new metric, the \textit{Age of Actuated Information (AoAI)}, to evaluate the timeliness of actions concerning the age of the triggering data packets. These two metrics differ in scenarios where caching is possible. In contrast, they would be regarded identical in an instantaneous system model as described in \cite{nikkhah2023age,nikkhah2023ageo}. In this paper, we quantify both metrics for the cacheable non-instantaneous system model and give insights into their similarities and differences.

\section{System Model} \label{System Model}
We consider the system model depicted in Fig. \ref{SystemModelFig}. Time is slotted into equal-length slots. The system includes an actuator at the receiver, which operates using the information from a data packet and the power from an energy packet. A cache is available to store data packets for a later actuation. Additionally, there is a battery capable of storing a single energy packet. A data packet is generated and successfully transmitted at each time slot with a probability of $\lambda_1$. In case the system fails to successfully transmit the generated data, the packet is dropped. As a result, a successfully received data packet is always as fresh as possible. Similarly, with a probability of $\lambda_2$ in each time slot, ambient energy (such as solar, wind, vibration, radio frequency, to name a few) can be harvested to charge the battery with one energy packet. We define the events $\Lambda_1(t)=\{0,1\}$ and $\Lambda_2(t)=\{0,1\}$ as the realizations of $\lambda_1$ and $\lambda_2$, respectively, at time slot $t$. Then, $\Lambda_1(t)=0$ denotes unsuccessful, and $\Lambda_1(t)=1$ denotes a successful data packet reception. Also, $\Lambda_2(t)=0$ denotes unavailability, and $\Lambda_2(t)=1$ denotes availability for harvesting an energy packet.

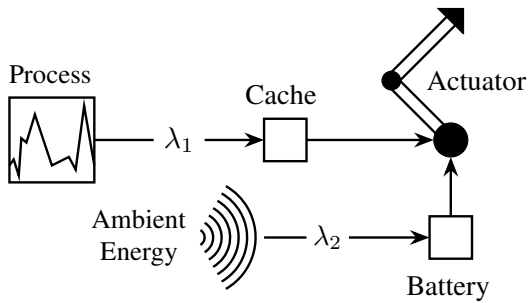
\begin{figure}[t] 
\centering
\resizebox{0.4\textwidth}{!}{%
\begin{tikzpicture}

\draw [thick] (4.5,0) rectangle (5.5,1) node[scale=0.9] [xshift=-0.58cm,yshift=0.3cm] {Process}; 
\draw [thick] (4.5,0.2)-- (4.55,0.27) -- (4.61,0.08) --(4.64,0.52) -- (4.7,0.47) -- (4.8,0.8) -- (5,0.2) -- (5.2,0.3) -- (5.28,0.15) -- (5.38,0.9) -- (5.5,0.2);

\draw [ thick] (5.5,0.5) -- (6.2,0.5) node[xshift=0.3cm, yshift=0cm] {$\lambda_1$};

 \foreach \r in {0.1,0.2,...,0.7} 
    \draw[thick,shift={(6.7cm,-0.65cm)}] (0,0) ++(-60:\r cm) arc (-60:60:\r cm) ;
    \draw[thick] (7.5,-0.65) -- (8,-0.65) node[scale=0.9,align=center, xshift=-2.2cm,yshift=0cm]  {Ambient \\ Energy} ;
    \draw[-Stealth, thick] (8.5,-0.65) -- (9.7-0.25,-0.65) node [ xshift=-1.2cm, yshift=0cm,scale=0.95] {$\lambda_2$};


\draw[-Stealth, thick]  (6.8,0.5) -- (7.5,0.5);


\draw [thick] (7.5,0.75) rectangle (8,0.25) node [ xshift=-0.3cm, yshift=0.78cm,scale=0.95] {Cache};

\draw[-Stealth, thick] (9.5-1.5,1-0.5) -- (11-1.5,1-0.5);
\draw[fill,black] (11.2-1.5,1-0.5) circle (0.2cm);
\draw[thick] (11.25-1.5,1.05-0.5) -- (11.25-0.7-1.5,1.05+0.7-0.5);
\draw[thick] (11.15-1.5,0.95-0.5) -- (11.15-0.7-1.5,0.95+0.7-0.5);
\draw[fill,black] (9,1.70-0.5) circle (0.11cm);
\node [ xshift=10cm, yshift=1.20cm,scale=0.95] {Actuator};
\draw[thick] (10.55-1.5,1.65-0.5) -- (10.55+0.7-1.5,1.65+0.7-0.5);
\draw[thick] (10.45-1.5,1.75-0.5) -- (10.45+0.7-1.5,1.75+0.7-0.5);

\draw[fill=black] (10.55+0.8-1.5,1.65+0.6-0.5) -- (10.45+0.6-1.5,1.75+0.8-0.5) -- (10.55+0.8-1.5,1.75+0.8-0.5)-- (10.55+0.8-1.5,1.65+0.6-0.5);

\draw[thick] (9.7-0.25,-0.4) rectangle (9.7+0.25,-0.9) node [ xshift=-0.28cm, yshift=-0.35cm,scale=0.95] {Battery};

\draw[-Stealth, thick] (9.7,-0.4) -- (9.7,0.3);

\end{tikzpicture}
}
\caption{The considered system model.} 
\label{SystemModelFig}
\end{figure}

The cache suffices to store only one data packet at each time slot. This is because the system is required to actuate upon the freshest data available at each time slot. The actuation is considered to be instantaneous. Consequently, older data packets can be discarded once the most recent data packet is cached. A data packet remains in the cache if it is the freshest available and has not been utilized for actuation. Therefore, upon the reception of a new data packet, the existing one in the cache is replaced. Also, once a data packet is actuated, it is removed from the cache.

Regarding operational sequence within a time slot, the system first checks for the availability of a data packet, whether cached or freshly received in the time slot, then immediately checks the availability of energy, whether in the battery or harvested, and actuates instantaneously, at the beginning of the time slot, if both data and energy are available. The system uses the energy stored in the battery, and if available, it can harvest and store energy for future use at the same time slot. Also, harvesting and utilizing energy for immediate actuation within the same time slot is still possible in scenarios where the battery is depleted

We represent the system state using a two-dimensional Markov process, denoted as $(C(t), B(t))$ to capture the dynamics of the system. In this work, we consider that the cache can store up to one data packet, and the battery can store up to one energy chunk. Thus, the state of the cache is represented by $C(t) \in \{0,1\}$, where $0$ signifies an empty cache and $1$ a full cache at the end of time slot $t$. Similarly, $B(t) \in \{0,1\}$ indicates the battery's state, with $0$ representing an empty battery and $1$ a full battery at the end of time slot $t$. The system can exist in three states: $(0,0)$, $(0,1)$, and $(1,0)$. The state $(1,1)$ is not feasible as if the cache contains an unused data packet and sufficient energy is available, both would be concurrently utilized for an actuation. Consequently, the transition probability matrix for this process is defined as
\setcounter{MaxMatrixCols}{30}
\begin{equation*} \label{TPM_SystemModel}
\mathbf{P}_s=
\begin{bmatrix}
\lambda_1 \lambda_2 + \bar{\lambda}_1 \bar{\lambda}_2 &  \bar{\lambda}_1 \lambda_2 & \lambda_1 \bar{\lambda}_2\\
\lambda_1 \bar{\lambda}_2 & \lambda_2 +\bar{\lambda}_1 \bar{\lambda}_2 &0 \\
 \lambda_2  & 0 & \bar{\lambda}_2  \\
\end{bmatrix} ,
\end{equation*}

where the rows and columns represent the joint states $(C(t),B(t))$ for states' realizations $(0,0)$, $(0,1)$, and $(1,0)$, respectively.

\section{Analytical Results}
In this section, we present the analysis for the AoA and the AoAI for the considered system model.

\subsection{Age of Information (AoI)}

AoI represents the time elapsed since the generation of the freshest data packet available. It is defined as $I(t)=t-u(t)$, where $u(t)$ is the timestamp of the last successfully received data packet. Accordingly, AoI is reset to $1$ with probability $\lambda_1$ and increments by $1$ with probability $\bar{\lambda}_1$. 
Then, the average AoI would be $\bar{I}=\frac{1}{\lambda_1}$.

\begin{figure}[!h]
\centering 
\scalebox{0.9}{ \boldmath{
\begin{tikzpicture}

{
    \pgfdeclarepatternformonly{Sparse Vertical Lines}
    {%
        \pgfqpoint{-1pt}{-1pt}%
    }
    {%
        \pgfqpoint{10pt}{10pt}%
    }
    {%
        \pgfqpoint{9pt}{9pt}%
    }
    {
        \pgfsetlinewidth{0.1pt} 
        \pgfpathmoveto{\pgfqpoint{0pt}{0pt}}
        \pgfpathlineto{\pgfqpoint{0pt}{9.1pt}}
        \pgfusepath{stroke}
    }
}

{
    \pgfdeclarepatternformonly{Sparse Horizontal Lines}
    {%
        \pgfqpoint{-1pt}{-1pt}%
    }
    {%
        \pgfqpoint{10pt}{10pt}%
    }
    {%
        \pgfqpoint{9pt}{9pt}%
    }
    {
        \pgfsetlinewidth{0.1pt} 
        \pgfpathmoveto{\pgfqpoint{0pt}{0pt}}
        \pgfpathlineto{\pgfqpoint{9.1pt}{0pt}}
        \pgfusepath{stroke}
    }
    }

{
    \pgfdeclarepatternformonly{Sparse North East Lines}
    {%
        \pgfqpoint{-1pt}{-1pt}%
    }
    {%
        \pgfqpoint{10pt}{10pt}%
    }
    {%
        \pgfqpoint{9pt}{9pt}%
    }
    {
        \pgfsetlinewidth{0.1pt} 
        \pgfpathmoveto{\pgfqpoint{0pt}{0pt}}
        \pgfpathlineto{\pgfqpoint{9.1pt}{9.1pt}}
        \pgfusepath{stroke}
    }
    }

    \pgfdeclarepatternformonly{Vertical Lines}
    {%
        \pgfqpoint{-1pt}{-1pt}%
    }
    {%
        \pgfqpoint{4pt}{4pt}%
    }
    {%
        \pgfqpoint{3pt}{3pt}%
    }
    {
        \pgfsetlinewidth{0.1pt} 
        \pgfpathmoveto{\pgfqpoint{0pt}{0pt}}
        \pgfpathlineto{\pgfqpoint{0pt}{3.1pt}}
        \pgfusepath{stroke}
    }

    \pgfdeclarepatternformonly{Horizontal Lines}
    {%
        \pgfqpoint{-1pt}{-1pt}%
    }
    {%
        \pgfqpoint{4pt}{4pt}%
    }
    {%
        \pgfqpoint{3pt}{3pt}%
    }
    {
        \pgfsetlinewidth{0.1pt} 
        \pgfpathmoveto{\pgfqpoint{0pt}{0pt}}
        \pgfpathlineto{\pgfqpoint{3.1pt}{0pt}}
        \pgfusepath{stroke}
    }

    \pgfdeclarepatternformonly{North East Lines}
    {%
        \pgfqpoint{-1pt}{-1pt}%
    }
    {%
        \pgfqpoint{4pt}{4pt}%
    }
    {%
        \pgfqpoint{3pt}{3pt}%
    }
    {
        \pgfsetlinewidth{0.1pt} 
        \pgfpathmoveto{\pgfqpoint{0pt}{0pt}}
        \pgfpathlineto{\pgfqpoint{3.1pt}{3.1pt}}
        \pgfusepath{stroke}
    }

\fill[pattern=Sparse Vertical Lines] (0,0) rectangle (1,1);
\fill[pattern=Sparse Vertical Lines] (1,0) rectangle (2,2);
\fill[pattern=Sparse Vertical Lines] (2,0) rectangle (3,1);
\fill[pattern=Sparse Vertical Lines] (3,0) rectangle (4,2);
\fill[pattern=Sparse Vertical Lines] (4,0) rectangle (5,3);
\fill[pattern=Sparse Vertical Lines] (5,0) rectangle (6,4);
\fill[pattern=Sparse Vertical Lines] (6,0) rectangle (7,1);
\fill[pattern=Sparse Vertical Lines] (7,0) rectangle (8,2);

\draw[line width=4pt, green]  (0,1) -- (1,1) -- (1,2) -- (2,2) -- (2,1) -- (3,1) -- (3,2) -- (4,2) -- (4,3) -- (5,3) -- (5,4) -- (6,4) -- (6,1) -- (7,1) -- (7,2) -- (8,2);

\fill[pattern=Sparse Horizontal Lines] (0,0) rectangle (1,1);
\fill[pattern=Sparse Horizontal Lines] (1,0) rectangle (2,2);
\fill[pattern=Sparse Horizontal Lines] (2,0) rectangle (3,3);
\fill[pattern=Sparse Horizontal Lines] (3,0) rectangle (4,4);
\fill[pattern=Sparse Horizontal Lines] (4,0) rectangle (5,1);
\fill[pattern=Sparse Horizontal Lines] (5,0) rectangle (6,2);
\fill[pattern=Sparse Horizontal Lines] (6,0) rectangle (7,3);
\fill[pattern=Sparse Horizontal Lines] (7,0) rectangle (8,4);

\draw[line width=2pt, blue]  (0,1) -- (1,1) -- (1,2) -- (2,2) -- (2,3) -- (3,3) -- (3,4) -- (4,4) -- (4,1) -- (5,1) -- (5,2) -- (6,2) -- (6,3) -- (7,3) -- (7,4) -- (8,4);

\fill[pattern=Sparse North East Lines] (0,0) rectangle (1,1);
\fill[pattern=Sparse North East Lines] (1,0) rectangle (2,2);
\fill[pattern=Sparse North East Lines] (2,0) rectangle (3,3);
\fill[pattern=Sparse North East Lines] (3,0) rectangle (4,4);
\fill[pattern=Sparse North East Lines] (4,0) rectangle (5,3);
\fill[pattern=Sparse North East Lines] (5,0) rectangle (6,4);
\fill[pattern=Sparse North East Lines] (6,0) rectangle (7,5);
\fill[pattern=Sparse North East Lines] (7,0) rectangle (8,6);

\draw[line width=0.8pt, red] (0,1) -- (1,1) -- (1,2) -- (2,2) -- (2,3) -- (3,3) -- (3,4) -- (4,4) -- (4,3) -- (5,3) -- (5,4) -- (6,4) -- (6,5) -- (7,5) -- (7,6) -- (8,6);

\draw[-Stealth, very thick ] (0,0) -- (9,0)  node [xshift=-0.2cm, yshift=0.4cm] {$t$} node [xshift=-9cm, yshift=-0.4cm] {$0$}  node [xshift=-8cm, yshift=-0.4cm] {$1$}  node [xshift=-7cm, yshift=-0.4cm] {$2$}  node [xshift=-6cm, yshift=-0.4cm] {$3$}  node [xshift=-5cm, yshift=-0.4cm] {$4$}  node [xshift=-4cm, yshift=-0.4cm] {$5$}  node [xshift=-3cm, yshift=-0.4cm] {$6$}  node [xshift=-2cm, yshift=-0.4cm] {$7$}  node [xshift=-1cm, yshift=-0.4cm] {$8$};

\draw[-Stealth, very thick ] (0,0) -- (0,7) node [xshift=0.7cm, yshift=-0.2cm] {$AI(t)$}  node [xshift=0.7cm, yshift=-0.7cm] {$A(t)$} node [xshift=0.7cm, yshift=-1.2cm] {$I(t)$} node [xshift=-0.3cm, yshift=-6cm] {$1$} node [xshift=-0.3cm, yshift=-7cm] {$0$};

\draw (1,-0.1) -- (1,0.1);
\draw (2,-0.1) -- (2,0.1);
\draw (3,-0.1) -- (3,0.1);
\draw (4,-0.1) -- (4,0.1);
\draw (5,-0.1) -- (5,0.1);
\draw (6,-0.1) -- (6,0.1);
\draw (7,-0.1) -- (7,0.1);
\draw (8,-0.1) -- (8,0.1);

\draw (-0.1,1) -- (0.1,1);
\draw (-0.1,2) -- (0.1,2);
\draw (-0.1,3) -- (0.1,3);
\draw (-0.1,4) -- (0.1,4);
\draw (-0.1,5) -- (0.1,5);
\draw (-0.1,6) -- (0.1,6);

\node[xshift=3cm,yshift=6.5cm] {AoAI};
\node[xshift=3cm,yshift=6cm] {AoA};
\node[xshift=3cm,yshift=5.5cm] {AoI};

\fill[pattern=North East Lines] (3.5,6.3) rectangle (4.5,6.7);
\fill[pattern=Horizontal Lines] (3.5,5.8) rectangle (4.5,6.2);
\fill[pattern=Vertical Lines] (3.5,5.3) rectangle (4.5,5.7);

\draw[line width = 2pt, red] (3.5,6.7) -- (4.5,6.7);
\draw[line width = 2pt, blue]  (3.5,6.2) rectangle (4.5,6.2);
\draw[line width = 2pt, green] (3.5,5.7) rectangle (4.5,5.7);

\draw (2.5,5.2) rectangle (4.7,6.8);

\end{tikzpicture}}} 
\caption{A sample path of the evolution of AoAI, AoA, and AoI.}
\label{AoAI_AoA_AoI_Evo}
\end{figure}
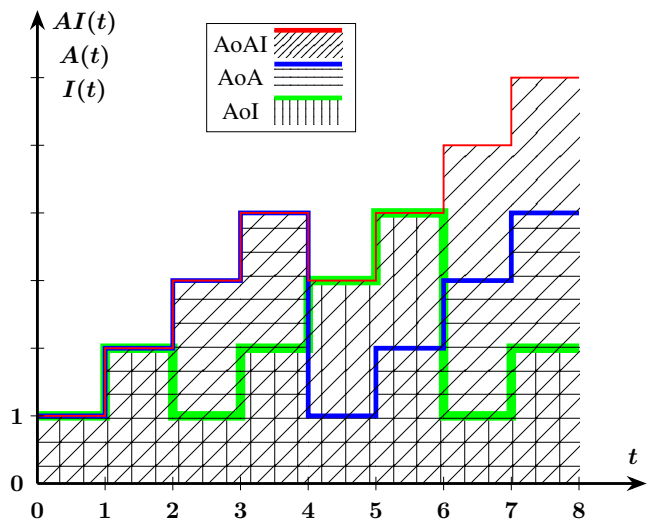

\begin{table}[!h]
\centering
\caption{Events of data reception and energy availability, and states of the cache and the battery.}
\label{Events}
\begin{tabular}{c|c|c|c|c|c|c|c|}
\cline{2-8}
& $t=1$ & $t=2$ & $t=3$ & $t=4$ & $t=5$ & $t=6$ & $t=7$ \\ \hline
\multicolumn{1}{|c|}{$\Lambda_1(t)$}  &0 &1 &0 &0 &0 &1 &0 \\ \hline
\multicolumn{1}{|c|}{$\Lambda_2(t)$} &0 &0 &0 &1 &0 &0 &0 \\ \hline
\multicolumn{1}{|c|}{$C(t)$} &0 &1 &1 &0 &0 &1 &1 \\ \hline
\multicolumn{1}{|c|}{$B(t)$} &0 &0 &0 &0 &0 &0 &0 \\ \hline
\end{tabular}
\end{table}

\subsection{Age of Actuation (AoA)} 
AoA is defined as $A(t)=t-a(t)$, where $a(t)$ is the timestamp of the last performed actuation. At each time slot, AoA is reset to $1$ if there is an actuation and increases by $1$ otherwise, irrespective of the age of the actuated data packet. This metric is relevant for capturing the timeliness of actions without considering the freshness of the actuated information.

\begin{figure*}
\hrule
\vspace{1em}
\begin{equation} \label{Average_AoA}
\bar{A}=
\frac{\lambda_{1}^{4} \left(\lambda_{2} - 1\right)^{3} - 2 \lambda_{1}^{3} \lambda_{2} \left(\lambda_{2} - 1\right)^{2} - \lambda_{1}^{2} \lambda_{2}^{2} \left(2 \lambda_{2}^{2} - 3 \lambda_{2} + 1\right) + \lambda_{1} \lambda_{2}^{3} \left(3 \lambda_{2} - 2\right) - \lambda_{2}^{4}}{\lambda_{1} \lambda_{2} \left(\lambda_{1} \left(\lambda_{2} - 1\right) - \lambda_{2}\right) \left(\lambda_{1}^{2} \left(\lambda_{2} - 1\right)^{2} + \lambda_{1} \left(- 2 \lambda_{2}^{2} + \lambda_{2}\right) + \lambda_{2}^{2}\right)}
\vspace{-5pt}
\end{equation}
\end{figure*}
\begin{figure*}
{\small
\begin{equation} \label{Average_AoAI}
\overline{AI}=\frac{\lambda_1^4 \left(\lambda_2-4\right) \left(\lambda_2-1\right)^3 \lambda_2 - 4 \lambda_1^3 \left( \lambda_2-1\right)^3 \lambda_2^2 + \lambda_1 \left(3 - 4 \lambda_2\right) \lambda_2^4 + \lambda_2^5 + \lambda_1^5 \left(\lambda_2-1\right)^3 \left(2 \lambda_2-1\right) + 2 \lambda_1^2 \lambda_2^3 \left(2 - 5 \lambda_2 + 3 \lambda_2^2\right)}{\lambda_1 \lambda_2 \left(\lambda_1 + \lambda_2 - \lambda_1 \lambda_2\right)^2 \left(\lambda_1^2 \left(\lambda_2-1\right)^2 + \lambda_2^2 + \lambda_1 \left(\lambda_2 - 2 \lambda_2^2\right)\right)}
\end{equation}}
\hrule
\end{figure*}

\begin{theorem}         
The average AoA is given by (\ref{Average_AoA}) at the top of this page.
\end{theorem}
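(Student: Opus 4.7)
The plan is to enlarge the two-dimensional chain of Section~\ref{System Model} with the age $A$ and to extract $\bar A$ from the stationary distribution of the resulting three-dimensional chain on states $(c,b,n)$ with $(c,b)\in\{(0,0),(0,1),(1,0)\}$ and $n\ge 1$. The update rule is that $A$ resets to $1$ whenever an actuation fires in the current slot and otherwise increments by $1$; together with $\mathbf P_s$, this fully specifies the transitions. Two structural observations shorten the work. First, conditional on $(C,B)$, an actuation fires with probability $\lambda_1\lambda_2$, $\lambda_1$, and $\lambda_2$ in the three states. Second, an actuation leaves $(C,B)$ in $(0,0)$ or $(0,1)$ but never in $(1,0)$, so $\pi_{1,0,1}=0$. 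The remaining boundary values $\pi_{0,0,1}$ and $\pi_{0,1,1}$ are simple linear combinations of the marginals $\pi_{00},\pi_{01},\pi_{10}$ obtained by solving $\pi=\pi\mathbf P_s$.

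Next, I would introduce the partial generating functions $F_{c,b}(z)=\sum_{n\ge 1}\pi_{c,b,n}z^n$. The non-renewal transitions yield a triangular system: the $(0,0)$-component satisfies $(1-\bar\lambda_1\bar\lambda_2 z)F_{00}(z)=\pi_{0,0,1}z$, and the other two reduce to affine equations in $F_{00}(z)$, namely $(1-\bar\lambda_1 z)F_{01}(z)=\pi_{0,1,1}z+\bar\lambda_1\lambda_2 z\,F_{00}(z)$ and $(1-\bar\lambda_2 z)F_{10}(z)=\lambda_1\bar\lambda_2 z\,F_{00}(z)$. Each $F_{c,b}$ is therefore an explicit rational function of $z$, $\lambda_1$, and $\lambda_2$ that can be differentiated in closed form. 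Solving $\pi=\pi\mathbf P_s$ in parallel gives the convenient ratios $\pi_{01}/\pi_{00}=\bar\lambda_1\lambda_2/(\lambda_1\bar\lambda_2)$ and $\pi_{10}/\pi_{00}=\lambda_1\bar\lambda_2/\lambda_2$, which feed into $\pi_{0,0,1}$ and $\pi_{0,1,1}$.

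Finally, $\bar A=F'_{00}(1)+F'_{01}(1)+F'_{10}(1)$. Plugging in the marginals together with the identity $1-\bar\lambda_1\bar\lambda_2=\lambda_1+\lambda_2-\lambda_1\lambda_2$ turns each derivative into a rational function of $\lambda_1,\lambda_2$ only, and summing them produces the mean. The probabilistic content is a routine first-step analysis; the real obstacle is the algebraic consolidation of the three derivatives into the compact rational expression displayed in (\ref{Average_AoA}). I would carry out this last simplification with a symbolic manipulation system, factor numerator and denominator to bring them into the stated form, and cross-check the closed-form answer numerically against the simulations reported later in the paper.
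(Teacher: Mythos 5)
Your proposal is correct and follows essentially the same route as the paper: augment the $(C,B)$ chain with the age $A$, exploit the geometric recursions within each $(C,B)$ class, pin down the boundary masses at $A=1$ via the actuation probabilities $\lambda_1\lambda_2,\ \lambda_1,\ \lambda_2$, and take the first moment. Your generating-function bookkeeping and your use of the stationary distribution of $\mathbf{P}_s$ to get $\pi_{0,0,1}$ and $\pi_{0,1,1}$ directly are a slightly tidier packaging of the paper's summation of the recursive state equations and its normalization step, but the substance is identical.
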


\begin{proof}
Refer to Appendix \ref{Proof of the Theorem 1: The Average AoA}.
\end{proof}

\subsection{Age of Actuated Information (AoAI)}
AoAI is defined as $AI(t)=t-a(t)+I(a(t))=A(t)+I(a(t))$, where $I(a(t))$ represents the AoI at the time of the last actuation. AoAI measures the time elapsed since the generation of the last actuated data packet. AoAI is reset to the current AoI upon an actuation and increases by $1$ otherwise. This metric assesses the timeliness of actions in relation to the freshness of the actuated data packets.
\begin{theorem}
The average AoAI is given by (\ref{Average_AoAI}) at the top of this page.
\end{theorem}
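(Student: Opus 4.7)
The plan is to reduce the problem to one already solved in Theorem 1 by using the identity $AI(t)=A(t)+I(a(t))$ stated just before the theorem. By linearity and ergodicity, $\bar{AI}=\bar{A}+\bar{J}$, where $\bar{J}$ denotes the steady-state time-average of the held AoI $J(t):=I(a(t))$ --- the AoI sampled at the last actuation and carried forward until the next one. Since $\bar{A}$ is already available from (\ref{Average_AoA}), everything reduces to computing $\bar{J}$.

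To compute $\bar{J}$ I would extend the two-dimensional chain $(C(t),B(t))$ of Section \ref{System Model} into a three-dimensional chain $(C(t),B(t),I(t))$ that also tracks the current AoI, with marginal dynamics $I(t+1)=1$ when $\Lambda_1(t+1)=1$ and $I(t+1)=I(t)+1$ otherwise. The key observation that makes this 3D state sufficient is that whenever $C(t)=1$ the cached packet's age coincides with $I(t)$, because any fresh reception simultaneously resets the AoI and overwrites the cache. Consequently, the AoI at any actuation is completely determined by the 3D state and the realisations of $\Lambda_1,\Lambda_2$ in that slot: it equals $1$ whenever a fresh packet is consumed, and equals $I(t)+1$ precisely in the case $(C,B)=(1,0)$ with $\Lambda_1=0$ and $\Lambda_2=1$, when the cached packet is consumed.

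Solving the balance equations of this 3D chain should yield a stationary distribution $\pi(c,b,i)$ whose $i$-marginal has a geometric tail for each $(c,b)$ (since $I$ only resets or increments), and whose $(c,b)$-marginals coincide with the $\pi_{00},\pi_{01},\pi_{10}$ computed from $\mathbf{P}_s$. I would then translate $\pi(c,b,i)$ into $\bar{J}$ via a renewal-reward argument on inter-actuation cycles: $J$ is held at some $J_k$ over a cycle of length $L_k$, so $\bar{J}=E[J_k L_k]/E[L_k]$. Here $E[L_k]=1/p_a$ with $p_a=\pi_{00}\lambda_1\lambda_2+\pi_{01}\lambda_1+\pi_{10}\lambda_2$; the numerator is obtained by conditioning on the pre-actuation state (which pins down the law of $J_k$ through the 3D stationary distribution) and on the post-actuation state (which determines the law of $L_k$ through first-passage times $T_{(0,0)},T_{(0,1)},T_{(1,0)}$ that satisfy a $3\times 3$ linear system).

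The main obstacle I expect is precisely this cross term $E[J_k L_k]$: $J_k$ and $L_k$ are not independent, because actuations from $(1,0)$ with $\Lambda_1=0$ simultaneously produce a geometrically distributed $J_k$ \emph{and} always leave the post-actuation state equal to $(0,0)$, coupling the two. Carrying this joint distribution cleanly through the bookkeeping, then adding the resulting $\bar{J}$ to (\ref{Average_AoA}) and simplifying the rational function in $\lambda_1,\lambda_2$, should produce the expression in (\ref{Average_AoAI}).
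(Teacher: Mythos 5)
Your route is genuinely different from the paper's and is viable in principle, but it is not the argument given there. The paper does not decompose $AI(t)$ at all: it builds a single three-dimensional chain directly on $(AI,I,B)$, with \emph{both} $AI$ and $I$ unbounded (constrained by $AI\ge I$ and by $B=1\Rightarrow AI=I$), writes its transition matrix (\ref{TPM_AoAI}), expresses every stationary probability recursively in terms of the two seed states $V_{1,1,0}$ and $V_{1,1,1}$, pins those down via normalization together with one extra balance relation, and finally sums $AI\cdot V_{AI,I,B}$. Your plan instead reuses Theorem~1 and isolates the new content in the held-AoI average $\bar J$, computed by renewal--reward over inter-actuation cycles on the smaller chain $(C,B,I)$, which has only one unbounded coordinate. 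What your approach buys: you only need the $(c,b)$-marginals of $\mathbf{P}_s$, the conditional mean of $I$ given $(C,B)=(1,0)$, the actuation rate $p_a$, and two first-passage times, instead of a doubly infinite family of balance equations; and your observation that the cached packet's age coincides with $I(t)$ whenever $C(t)=1$ is exactly the structural fact that also underlies the paper's state-space restrictions. What the paper's approach buys: by carrying $AI$ inside the state it never has to confront the cross term $E[J_kL_k]$ or the Markov-renewal structure you correctly flag as delicate --- the post-actuation state can be $(0,0)$ or $(0,1)$, so actuation epochs are only semi-Markov regeneration points and your conditioning on the post-state is genuinely necessary, not optional.

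One bookkeeping error must be fixed before your algebra can close. Your first paragraph ($\overline{AI}=\bar A+\bar J$ with $J=I(a(t))$) and your second paragraph (the actuated cached packet has age $I(t)+1$, a fresh one has age $1$) use incompatible conventions. If $J_k$ is the end-of-slot AoI at the actuation --- i.e., the value AoAI is reset to, per Fig.~\ref{AoAI_AoA_AoI_Evo} --- then, since $A$ is reset to $1$ rather than $0$ at an actuation, the correct identity is $\overline{AI}=\bar A+\bar J-1$. The limit $\lambda_2\to 1$ in the Remark is the quick sanity check: there every actuation is fresh, $J\equiv 1$, and $\overline{AI}=\bar A=1/\lambda_1$, which is consistent with $\bar A+\bar J-1$ and rules out $\bar A+\bar J$.
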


\begin{proof}
See Appendix \ref{Proof of the Theorem 2: The Average AoAI}.
\end{proof}

\begin{figure}
    \centering
    \includegraphics[scale=0.6]{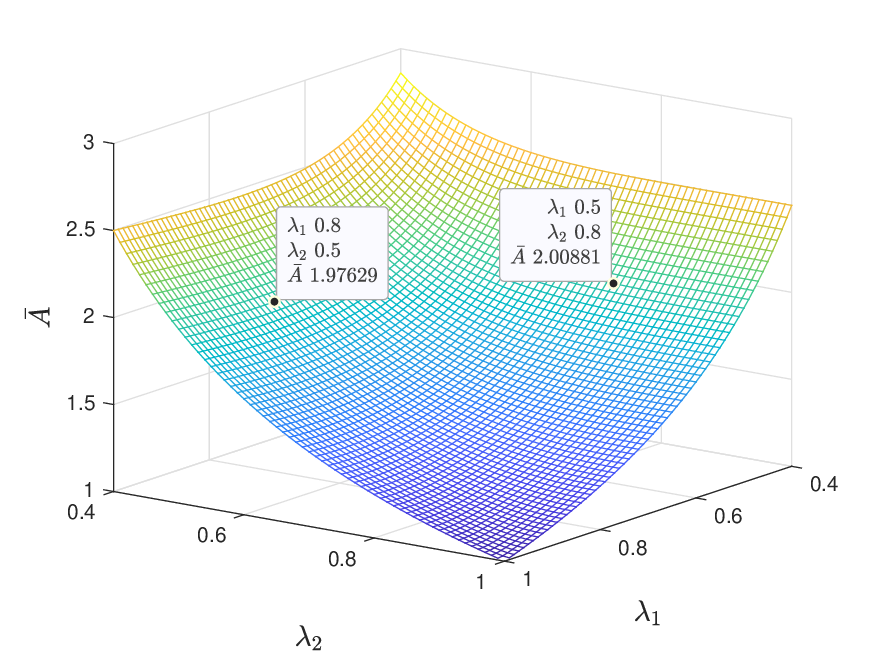}
    \caption{The average AoA versus $\lambda_1$ and $\lambda_2$.}
     \label{agesurf}
\end{figure}

\begin{figure}
    \centering
    \includegraphics[scale=0.6]{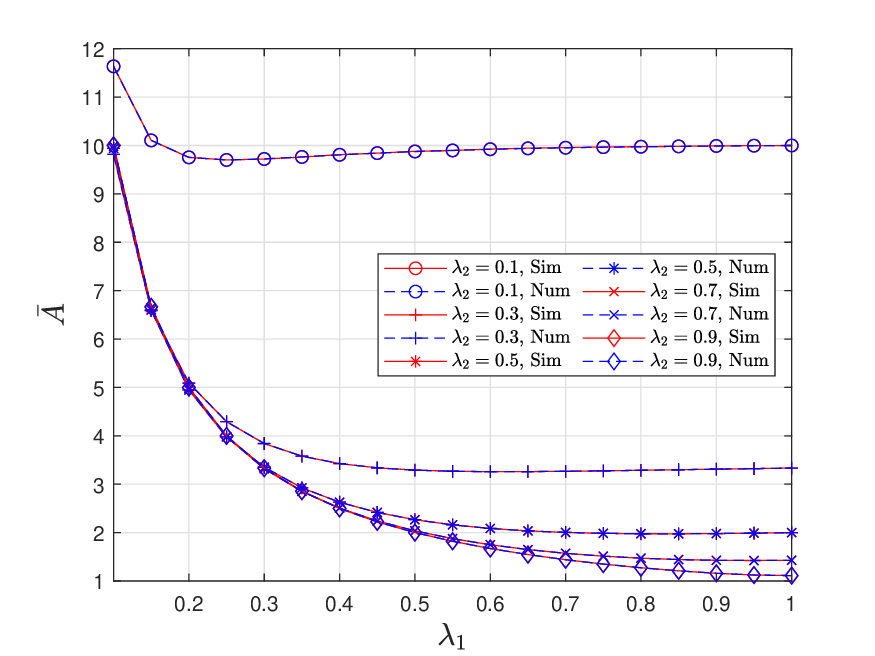}
    \caption{The average AoA for the different values of $\lambda_2$ versus $\lambda_1$.}
    \label{AoAq2_0.13579}
\end{figure}

\begin{figure}
    \centering
    \includegraphics[scale=0.6]{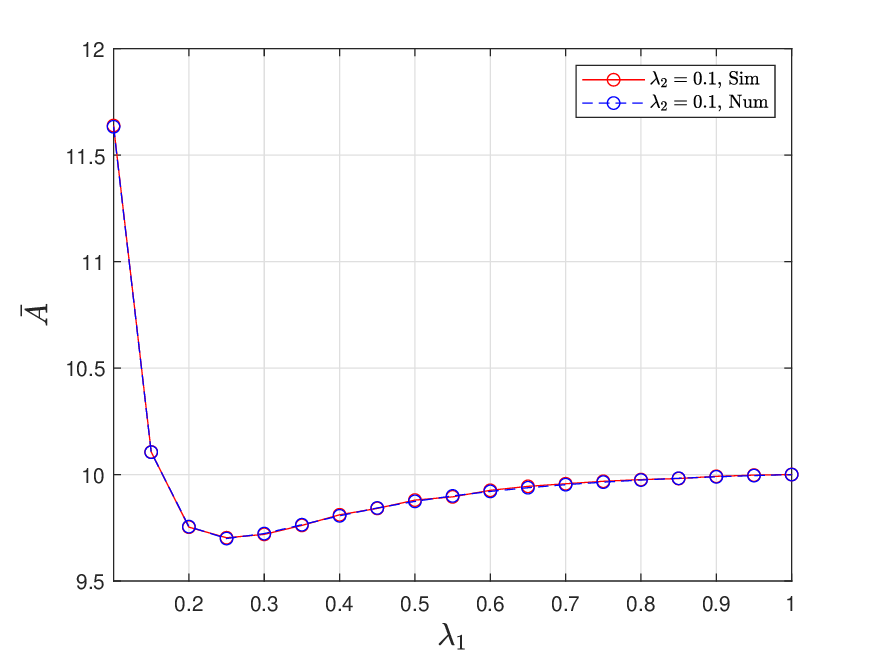}
    \caption{The average AoA for $\lambda_2=0.1$ versus $\lambda_1$.}
     \label{AoAq2_0.1}
\end{figure}

\subsection{The relations between the AoI, AoA, and AoAI}
The two metrics of AoA and AoAI can be considered the same in an instantaneous system model in \cite{nikkhah2023age,nikkhah2023ageo}, since $I(a(t))=0$ for any actuation when the packet is fresh at the beginning of the actuation time slot.
Fig. \ref{AoAI_AoA_AoI_Evo} illustrates a sample path demonstrating the concurrent evolution of these three metrics. The realizations of the random variables $\lambda_1$ and $\lambda_2$, and the processes $C(t)$ and $B(t)$ are also presented for the illustrated time slots in table \ref{Events}.

We have data reception in time slot $t=2$ and $t=6$. Thus, AoI is reset to $1$. In time slot $t=4$, an energy packet is harvested, and the system performs an actuation based on a data packet received and cached at time slot $t=2$. Thus, at time slot $t=4$, the AoI keeps growing, AoA is reset to $1$, and AoAI is reset to the AoI since the AoI is the age of the actuated packet at this time slot.

\vspace{10pt}
\begin{remark}
We always have $AI(t) \geq I(t)$ and $A(t) \geq A(t)$: AoAI is an upper bound for both the AoI and AoA. However, it is evident by the AoI definition that the reception of a data packet is always prior to its utilization for an actuation. For AoA, it is because the age of a packet cannot be lower than $1$ at the end of a time slot. As a result then $\overline{AI} \geq  \bar{I}$ and $\overline{AI} \geq \bar{A}$. Note that the AoA can be less or larger than the AoI, as both cases can be noticed and tracked in Fig. \ref{AoAI_AoA_AoI_Evo}. Nevertheless, ultimately, we always have $\bar{I} \leq \bar{A}$. The intuition is that each successfully received data packet resets the AoI to $1$ but does not necessarily reset the AoA to $1$. In other words, for the AoI to reset to $1$, only a received data packet is needed. However, an energy packet is also needed for the AoA to reset to $1$. Therefore, we have $\bar{I} \leq \bar{A} \leq \overline{AI}$.
Also, in extreme cases, we have:
\begin{itemize}
    \item $\lambda_1 \rightarrow 1$ : \  $\overline{AI} \rightarrow \bar{A} \rightarrow \displaystyle{\frac{1}{\lambda_2}} \geq \bar{I} \rightarrow 1 $,
    \item $\lambda_2 \rightarrow 1$ : \  $\overline{AI} \rightarrow \bar{A} \rightarrow \bar{I} \rightarrow \displaystyle{\frac{1}{\lambda_1}} \geq 1 $,
    \item $\lambda_1 \rightarrow 1, \ \lambda_2 \rightarrow 1$: \ $\overline{AI} \rightarrow \bar{A} \rightarrow \bar{I} \rightarrow  1 $.
\end{itemize}
\end{remark}
\section{Numerical and Simulation Results}

In this section, we present the numerical and simulation results for the average AoA and AoAI. We observe that both the average AoA and average AoAI exhibit approximately symmetric behavior with respect to their variables $\lambda_1$ and $\lambda_2$. It means that if we define $\bar{A}(\lambda_1,\lambda_2)$ and $\overline{AI}(\lambda_1,\lambda_2)$, we find that $\bar{A}(\lambda_1,\lambda_2) \approx \bar{A}(\lambda_2,\lambda_1)$ and $\overline{AI}(\lambda_1,\lambda_2) \approx \overline{AI}(\lambda_2,\lambda_1)$. Given the symmetry, we will only show present $\bar{A}$ and $\overline{AI}$ for various values of $\lambda_2$ against $\lambda_1$, as the reverse will not give more insights. 

\begin{figure}
    \centering
    \includegraphics[scale=0.6]{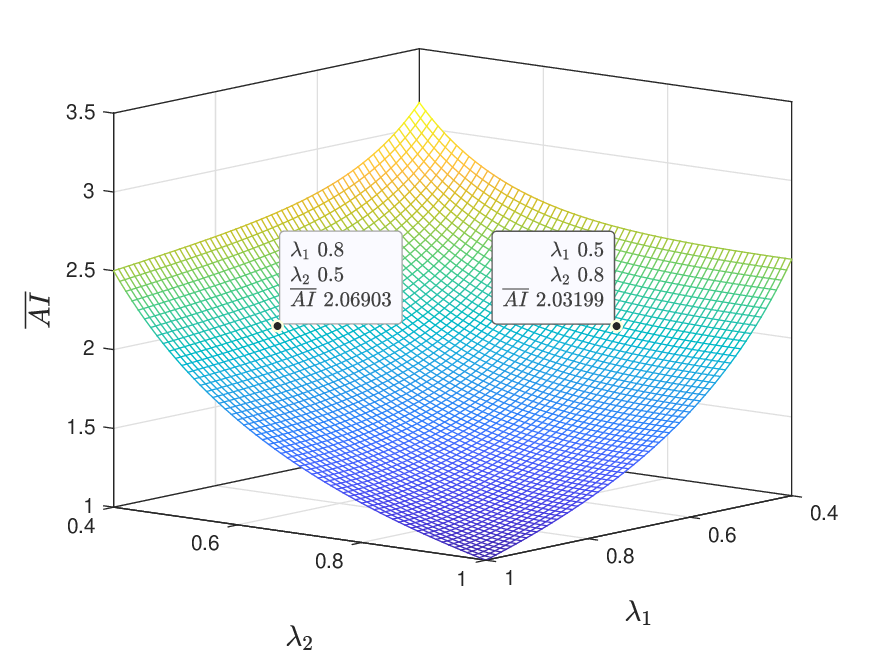}
    \caption{The average AoA versus $\lambda_1$ and $\lambda_2$.}
     \label{ageinfsurf_num}
\end{figure}

\begin{figure} 
    \centering
    \includegraphics[scale=0.6]{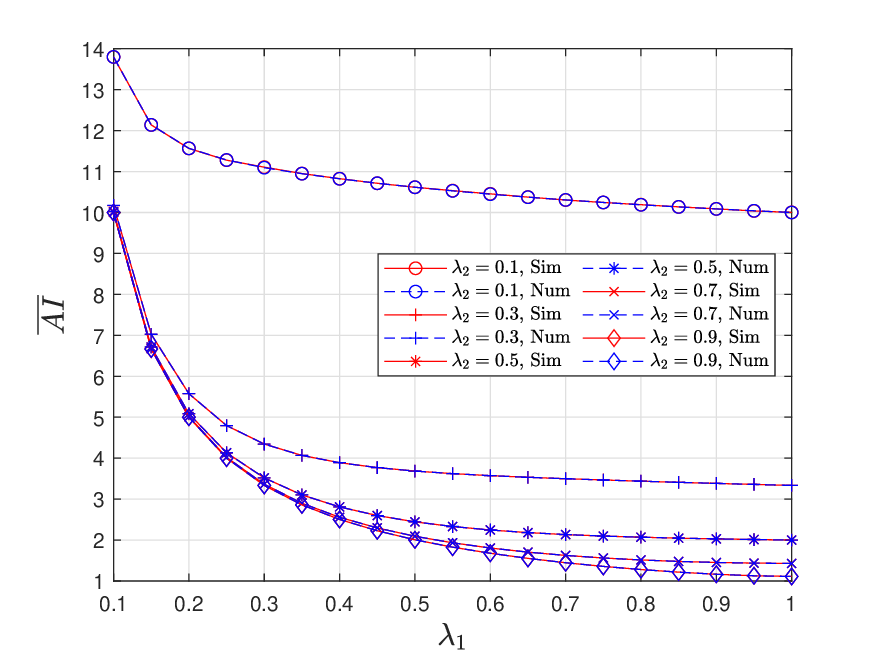}
    \caption{The average AoA for the different values of $\lambda_2$ versus $\lambda_1$.}
    \label{AoAIq2_0.13579}
\end{figure}

Fig. \ref{agesurf} shows the results for the average AoA against $\lambda_1$ and $\lambda_2$. It also can help show the symmetry mentioned above for the average AoA. In Fig. \ref{AoAq2_0.13579}, the curves for average AoA for five different values of $\lambda_2$ against $\lambda_1$ are depicted. Fig. \ref{AoAq2_0.1} displays the curve corresponding to $\lambda_2=0.1$ from Fig. \ref{AoAq2_0.13579}. Notably, for lower values of $\lambda_2$ ($\lambda_1$), increasing $\lambda_1$ ($\lambda_2$) leads to a rise in the average AoA, which is counterintuitive at first. This unexpected phenomenon provides the following insight: updating too frequently can result in higher average AoA values when data availability is very limited. Therefore, less frequent updates might be more effective in reducing the timeliness of actions when the age of the actuated packets is not a critical factor.

Fig. \ref{ageinfsurf_num} shows results for the average AoAI against $\lambda_1$ and $\lambda_2$. It also illustrates the previously mentioned symmetry for the average AoAI. Fig \ref{AoAIq2_0.13579} illustrates the average AoAI for different values of $\lambda_2$ against $\lambda_1$. Contrary to the average AoA, the average AoAI does not exhibit the same unexpected behavior. This means that, even at low values of $\lambda_2$ ($\lambda_1$), increasing $\lambda_1$ ($\lambda_2$) decreases the average AoAI, showcasing a different dynamic than the average AoA.

\section{Conclusion}
In this study, we have introduced two metrics to assess the timeliness of actions in systems where data packet caching is possible. The first metric, Age of Actuation (AoA), captures an action's timeliness irrespective of the packet's age. The second, the Age of Actuated Information (AoAI), incorporates the freshness of the utilized data packet into the evaluation. We formulated these metrics within a three-dimensional Markov framework to determine their steady states and calculated the average AoA and AoAI. Our numerical results, validated by simulations, illustrate the average AoA and AoAI performance under various conditions.

We illustrated that the average AoAI is strictly decreasing against $\lambda_1$ and $\lambda_2$. In contrast, the average AoA exhibits a more complex behavior: when data or energy arrivals are infrequent, increasing the frequency of the other can lead to suboptimal utilization of the scarce resource, thereby increasing the AoA.

\begin{appendices}
\section{Proof of the Theorem 1: The Average AoA} \label{Proof of the Theorem 1: The Average AoA}
To have the AoA states as a Markov process, consider the three-dimensional Markov process $(A,C,B)$, in which $A \in \{1,2,3,\hdots\}$ is the AoA at the end of the time slot. The state of the cache is represented by $C(t) \in {0,1}$, where $0$ signifies an empty cache and $1$ a full cache at the end of time slot $t$. Similarly, $B(t) \in {0,1}$ indicates the battery's state, with $0$ representing an empty battery and $1$ a full battery at the end of time slot $t$.

Thus, the states are $\{1,0,0\}$, $\{1,0,1\}$, $\{2,0,0\}$, $\{2,0,1\}$, $\{2,1,0\}$, $\{3,0,0\}$, $\{3,0,1\}$, $\{3,1,0\}$, and so on. Obviously, we cannot have states such as $\{A,1,1\}$. Because as we mentioned earlier, if there are both energy packet and data packet, they should already have been utilized for an actuation together. Also, we cannot have $\{1,1,0\}$, since $A=1$ means an instantaneous data packet has been utilized, and the cache should be empty.
Therefore, the transition probability matrix for this process can be represented in (\ref{TPM_AoA}).

\begin{figure*}
\hrule
\vspace{2em}
\setcounter{MaxMatrixCols}{30}
\begin{equation} \label{TPM_AoA}
\mathbf{P}^{A}=
\begin{bmatrix}
\lambda_1 \lambda_2 & 0  & \bar{\lambda}_1 \bar{\lambda}_2 & \bar{\lambda}_1 \lambda_2 & \lambda_1 \bar{\lambda}_2&0& 0& 0&0& 0&0& 0&\hdots\\
\lambda_1 \bar{\lambda}_2 & \lambda_1 \lambda_2 &0 &\bar{\lambda}_1 &  0 & 0&0&0&0& 0& 0& 0& \hdots \\
\lambda_1 \lambda_2 & 0 &0 & 0 &0 & \bar{\lambda}_1  \bar{\lambda}_2 & \bar{\lambda}_1 \lambda_2 & \lambda_1  \bar{\lambda}_2 &0& 0& 0& 0& \hdots \\
\lambda_1 \bar{\lambda}_2 & \lambda_1 \lambda_2 & 0&0& 0& 0 & \bar{\lambda}_1  &0& 0&0& 0& 0  & \hdots \\
\lambda_2 & 0&0&0&0&0&0& \bar{\lambda}_2& 0&0& 0& 0  & \hdots\\
\lambda_1 \lambda_2 &0&0&0&0&0&0&0&\bar{\lambda}_1  \bar{\lambda}_2 & \bar{\lambda}_1 \lambda_2 & \lambda_1  \bar{\lambda}_2& 0 & \hdots\\
\lambda_1 \bar{\lambda}_2 & \lambda_1 \lambda_2 & 0& 0& 0&0&0& 0& 0 & \bar{\lambda}_1  &0& 0  & \hdots\\
\lambda_2 & 0&0&0&0&0&0&0&0&0& \bar{\lambda}_2& 0 & \hdots\\
\vdots & \vdots & \vdots&\vdots&\vdots&\vdots&\vdots&\vdots&\vdots&\vdots&\vdots&\vdots & \ddots\\
\end{bmatrix} .
\end{equation}
\end{figure*}

\begin{figure*}
\setcounter{MaxMatrixCols}{30}
\begin{equation} \label{TPM_AoAI}
\mathbf{P}^{AI}=
\begin{bmatrix}
\lambda_1 \lambda_2 & 0  & \lambda_1 \bar{\lambda}_2 & \bar{\lambda}_1 \bar{\lambda}_2 & \bar{\lambda}_1 \lambda_2&0& 0& 0&0& 0&0& 0&0&0&0&\hdots\\

\lambda_1 \bar{\lambda}_2 & \lambda_1 \lambda_2 &0&0 &\bar{\lambda}_1 &  0 & 0&0&0&0&0&0&0& 0&  0& \hdots \\

\lambda_1 \lambda_2 &0 &0& \bar{\lambda}_1 \lambda_2 &0 &\lambda_1  \bar{\lambda}_2& \bar{\lambda}_1  \bar{\lambda}_2   &0& 0& 0& 0& 0& 0& 0&  0& \hdots \\

\lambda_1 \lambda_2 & 0 &0&0 &0&\lambda_1  \bar{\lambda}_2& 0 & \bar{\lambda}_1  \bar{\lambda}_2& \bar{\lambda}_1 \lambda_2    &0& 0&0& 0& 0& 0& \hdots \\

\lambda_1 \bar{\lambda}_2 & \lambda_1 \lambda_2&	0&	0&	0&	0&	0&	0&	\bar{\lambda}_1&	0&	0&	0 &0&  0& 0& \hdots\\

\lambda_1 \lambda_2 &0 &0& \bar{\lambda}_1 \lambda_2 &0&0& 0& 0& 0 &\lambda_1  \bar{\lambda}_2& \bar{\lambda}_1  \bar{\lambda}_2   &0&0&  0& 0& \hdots \\

\lambda_1 \lambda_2 & 0 &0& 0 &0&0 &0&\bar{\lambda}_1  \lambda_2& 0 & \lambda_1  \bar{\lambda}_2&0& \bar{\lambda}_1  \bar{\lambda}_2    &0&  0& 0& \hdots \\

\lambda_1 \lambda_2 & 0 &0& 0 &0&0 &0&0&0&\lambda_1  \bar{\lambda}_2& 0&0 & \bar{\lambda}_1  \bar{\lambda}_2& \bar{\lambda}_1 \lambda_2    & 0& \hdots \\

\lambda_1 \bar{\lambda}_2 & \lambda_1 \lambda_2&	0&	0&	0&	0&	0& 0&	0&	0&	0&	0&	0&	\bar{\lambda}_1&	0&	\hdots \\

\vdots&\vdots&\vdots&\vdots&\vdots&\vdots&\vdots&\vdots&\vdots&\vdots&\vdots&\vdots&\vdots&\vdots&\vdots & \ddots\\ 
\end{bmatrix}
\end{equation}
\hrule
\end{figure*}

The probability of occurrence of the state $(A,C,B)$ is denoted by $V_{A,C,B}$. The steady states vector is $\mathbf{V} = [V_{1,0,0}, V_{1,0,1}, V_{2,0,0}, V_{2,0,1}, V_{2,1,0}, \hdots]$. Hence, by the equations $\mathbf{V}\mathbf{P}=\mathbf{V}$ and $\mathbf{V} \mathbf{1}=1$, we can obtain the steady states. We define $\Delta_{j,k}=\sum_{i=1}^{\infty}V_{i,j,k}$.
The first two equations of $\mathbf{V}\mathbf{P}=\mathbf{V}$ yield
\begin{align}
V_{1,0,0}&=\lambda_1 \lambda_2 \Delta_{0,0} + \lambda_1 \bar{\lambda}_2 \Delta_{0,1} + \lambda_2 \Delta_{1,0},\\
V_{1,0,1}&=\lambda_1 \lambda_2 \Delta_{0,1}.  \label{V_101_Delta_01}
\end{align}
Writing the other equations of $\mathbf{V}\mathbf{P}=\mathbf{V}$ can provide us with three recursive structural equations
\begin{align}
V_{i+1,0,0}&=\bar{\lambda}_1 \bar{\lambda}_2  V_{i,0,0}, \  \forall i \in \{1,2,\hdots\},\\
V_{i+1,0,1}&=\bar{\lambda}_1 \lambda_2  V_{i,0,0} + \bar{\lambda}_1 V_{i,0,1}, \  \forall i \in \{1,2,\hdots\},\\
V_{i+1,1,0}&=\lambda_1 \bar{\lambda}_2   V_{i,0,0} + \bar{\lambda}_2 V_{i,1,0}, \  \forall i \in \{1,2,\hdots\}.
\end{align}
Taking $w=\lambda_1 \lambda_2$, $x=\lambda_1 \bar{\lambda}_2$, $y=\bar{\lambda}_1 \lambda_2$, and $z=\bar{\lambda}_1 \bar{\lambda}_2$, by the recursive equations, we can write all of the state probabilities based on only $V_{1,0,0}$ and $V_{1,0,1}$:
\begin{align*}
V_{1,0,0}&=V_{1,0,0}\\
V_{1,0,1}&=V_{1,0,1}\\
V_{2,0,0}&=(z) V_{1,0,0}\\
V_{2,0,1}&=(y) V_{1,0,0} + (\bar{\lambda}_1) V_{1,0,1}\\
V_{2,1,0}&=(x) V_{1,0,0}\\
V_{3,0,0}&=(z^2) V_{1,0,0}\\
V_{3,0,1}&=(zy+\bar{\lambda}_1 y) V_{1,0,0} + (\bar{\lambda}_1^2) V_{1,0,1}\\
V_{3,1,0}&=(zx+\bar{\lambda}_2 x) V_{1,0,0}\\
V_{4,0,0}&=(z^3) V_{1,0,0}\\
V_{4,0,1}&=(yz^2+\bar{\lambda}_1  yz + \bar{\lambda}_1^2 y) V_{1,0,0} + (\bar{\lambda}_1^3) V_{1,0,1}\\
V_{4,1,0}&=(xz^2+\bar{\lambda}_2 xz+\bar{\lambda}_2 x) V_{1,0,0}\\
V_{5,0,0}&=(z^4) V_{1,0,0}\\
V_{5,0,1}&=(yz^3+\bar{\lambda}_1 y z^2 + \bar{\lambda}_1^2 yz+\bar{\lambda}_1^3 y) V_{1,0,0} + (\bar{\lambda}_1^4) V_{1,0,1}\\
V_{5,1,0}&=(xz^3+\bar{\lambda}_2 xz^2+\bar{\lambda}_2^2 xz+ \bar{\lambda}_2^3 x) V_{1,0,0} \numberthis \label{StateEquations_AoA}
\end{align*}
If we sum the corresponding terms, we obtain:
\begin{equation}
\Delta_{0,0}=\frac{1}{1-z} V_{1,0,0},
\end{equation}
\begin{equation} \label{Delta_01}
\Delta_{0,1}=\frac{y}{\lambda_1} \frac{1}{1-z} V_{1,0,0} + \frac{1}{\lambda_1} V_{1,0,1},
\end{equation}
\begin{equation}
\Delta_{1,0}=\frac{x}{\lambda_2} \frac{1}{1-z} V_{1,0,0}.
\end{equation}
Also, the summation of all the steady states equals $1$ ($\mathbf{V} \mathbf{1}=1$), and by that we can obtain
\begin{equation} \label{V_101_V_100_1}
V_{1,0,1}=\frac{\lambda_1 \lambda_2+\bar{\lambda}_1  \lambda_2^2+\lambda_1^2 \bar{\lambda}_2}{\bar{\lambda}_1 \lambda_2 \bar{\lambda}_2 - \lambda_2} V_{1,0,0} +\lambda_1.
\end{equation}
as one relation between the two basic states of $V_{1,0,0}$ and $V_{1,0,1}$. 
Also, by replacing (\ref{Delta_01}) into (\ref{V_101_Delta_01}) we get another relation
\begin{equation} \label{V_101_V_100_2}
V_{1,0,1}=\frac{\bar{\lambda}_1 \lambda_2^2}{\bar{\lambda}_2}(\frac{1}{1-\bar{\lambda}_1 \bar{\lambda}_2})V_{1,0,0}
\end{equation}
between $V_{1,0,1}$ and $V_{1,0,0}$. Solving (\ref{V_101_V_100_1}) and (\ref{V_101_V_100_2}) yields
\begin{equation} \label{v_100}
V_{1,0,0}=\frac{  \lambda_1 (1-\bar{\lambda}_1 \bar{\lambda}_2) \bar{\lambda}_2 \lambda_2  }{  \bar{\lambda}_1 \lambda_2^3 +\lambda_1 \lambda_2 \bar{\lambda}_2 + \bar{\lambda}_1 \bar{\lambda}_2 \lambda_2^2 + \lambda_1^2 \bar{\lambda}_2^2 },
\end{equation}
and
\begin{equation} \label{v_101}
V_{1,0,1}=\frac{  \lambda_1 \bar{\lambda}_1  \lambda_2^3  }{  \bar{\lambda}_1 \lambda_2^3 +\lambda_1 \lambda_2 \bar{\lambda}_2 + \bar{\lambda}_1 \bar{\lambda}_2 \lambda_2^2 + \lambda_1^2 \bar{\lambda}_2^2 }.
\end{equation}

If we weight the terms in (\ref{StateEquations_AoA}) corresponding to their AoA, meaning that we multiply terms $V_{A,C,B}$ by $A$, sum all of the weighted terms, and replace the values of $V_{1,0,0}$ and $V_{1,0,1}$ from (\ref{v_100}) and (\ref{v_101}), and $w$, $x$, $y$, and $z$ we can obtain the expected value of the AoA, i.e., the average AoA as in (\ref{Average_AoA}).

\section{Proof of the Theorem 2: The Average AoAI} \label{Proof of the Theorem 2: The Average AoAI}
To calculate its average, we model a joint three-dimensional Markov process.
Each state is $(AI,I,B)$, in which $AI \in \{1,2,3,\hdots\}$ is the AoIA at the end of the time slot. $B \in \{0,1\}$ is the battery being empty or not, with $0$ representing an empty battery and $1$ a full battery at the end of time slot $t$. $I \in \{1,2,3,\hdots \}$ is the AoI, at each time slot. Thus, the states are $\{1,1,0\}$, $\{1,1,1\}$, $\{2,1,0\}$, $\{2,2,0\}$, $\{2,2,1\}$, $\{3,1,0\}$, $\{3,2,0\}$, $\{3,3,0\}$, $\{3,3,1\}$, $\{4,1,0\}$  and so on. Obviously, we cannot have states such that $AI<I$. Also, we cannot have states such that $B=1, I<AI$, since if there was enough energy, then the packet in the cache should have been utilized and the AoAI and AoI should have been the same, i.e., $B=1$ only if $AI=I$. Then, the transition probability matrix for this Markov process is (\ref{TPM_AoAI}).

The first two equations are
\begin{align} 
V_{1,1,0}&=\lambda_1 \lambda_2 B_0 + \lambda_1 \bar{\lambda}_2 B_1 \label{v_110}, \\
V_{1,1,1}&=\lambda_1 \lambda_2 B_1,  \label{v_111}
\end{align}
in which $B_1=\sum_{i=1}^{\infty}V_{i,i,1}$ and $B_0=1-B_1$. Other equations yield patterns as
\begin{equation} \label{V_AI10}
V_{AI,1,0}=\lambda_1 \bar{\lambda}_2 \sum_{I=1}^{A-1} V_{A-1,I,0},
\end{equation}
\begin{equation} \label{V_AII0}
V_{AI,I,0}= \bar{\lambda}_1  \bar{\lambda}_2 V_{A-1,I-1,0}  \ , I \neq 1, I \neq AI,
\end{equation}
\begin{equation} \label{V_AII0_2}
V_{AI,I,0}=\bar{\lambda}_1  \bar{\lambda}_2 V_{AI-1,I-1,0} + \bar{\lambda}_1  \lambda_2 \sum_{i=AI}^{\infty} V_{i,I-1,0} \ , AI=I \geq 2,
\end{equation}
\begin{equation} \label{V_AII1}
V_{AI,I,1}=\bar{\lambda}_1  \lambda_2 V_{AI-1,I-1,0} + \bar{\lambda}_1  V_{AI-1,I-1,1} \ , AI=I \geq 2.
\end{equation}
Equations (\ref{v_110}) and (\ref{v_111}) yield
\begin{equation} \label{A_1}
AI_1=\lambda_1 \lambda_2 + \lambda_1 \bar{\lambda}_2 B_1.
\end{equation}
Summing the equations from pattern (\ref{V_AI10}) yield
\begin{equation} \label{I_1_A_1}
I_1 -AI_1 = \lambda_1 \bar{\lambda}_2 B_0.
\end{equation}
From (\ref{A_1}) and (\ref{I_1_A_1}) we get that 
\begin{equation} \label{I_1}
I_1=\lambda_1,
\end{equation}
By using (\ref{I_1}) (in structures of (\ref{V_AII0_2})) and all the structures (\ref{V_AI10}), (\ref{V_AII0}), (\ref{V_AII0_2}), and (\ref{V_AII1}) we can recursively write all the states based on only two basic states of $V_{1,1,0}$ and $V_{1,1,1}$.

Then, the steady states can be written against $\lambda_1$ and $\lambda_2$.
{\small
\begin{align*}
V_{1,1,0}=&V_{1,1,0}\\
V_{1,1,1}=&V_{1,1,1}\\
V_{2,1,0}=&[x] V_{1,1,0}\\
V_{2,2,0}=&[z-y] V_{1,1,0} + [-y] V_{1,1,1}+y \lambda_1\\
V_{2,2,1}=&[y] V_{1,1,0} + [\bar{\lambda}_1] V_{1,1,1}\\
V_{3,1,0}=&[x^2 + xz-xy] V_{1,1,0}+[-xy]V_{1,1,1}+xy\lambda_1 \\
V_{3,2,0}=&[xz] V_{1,1,0}\\
V_{3,3,0}=&[z^2 -2yz] V_{1,1,0} + [-2yz]V_{1,1,1} + 2yz\lambda_1\\
V_{3,3,1}=&[yz -y^2 +\bar{\lambda}_1 y ] V_{1,1,0} + [-y^2 + \bar{\lambda}_1^2]V_{1,1,1} + y^2 \lambda_1\\
V_{4,1,0}=&[x^3 + 2x^2 z - x^2y +xz^2 -2xyz ]\\
&V_{1,1,0}+[-x^2y-2xyz]V_{1,1,1}+x^2y\lambda_1 + 2xyz\lambda_1 \\
V_{4,2,0}=&[x^2z+xz^2-xyz] V_{1,1,0}+[-xyz]V_{1,1,1}+xyz\lambda_1\\
V_{4,3,0}=&[xz^2] V_{1,1,0} \\
V_{4,4,0}=&[z^3 -3yz^2] V_{1,1,0} + [-3yz^2] V_{1,1,1} + 3yz^2 \lambda_1\\
V_{4,4,1}=&[yz^2 -2y^2z +yz\bar{\lambda}_1 -y^2 \bar{\lambda}_1 +y \bar{\lambda}_1^2 ] V_{1,1,0} +\\
&[-2y^2z -y^2 \bar{\lambda}_1 + \bar{\lambda}_1^3 ]V_{1,1,1} + 2y^2z \lambda_1+y^2 \bar{\lambda}_1 \lambda_1\\
V_{5,1,0}=&[x^4 + 3x^3 z - x^3y +3x^2z^2 -3x^2yz + xz^3 - 3xyz^2 ]\\
&V_{1,1,0}+[-x^3y-3x^2yz-3xyz^2]V_{1,1,1}\\
&+x^3y\lambda_1 + 3x^2yz\lambda_1+3xyz^2\lambda_1 \\
V_{5,2,0}=&[x^3 z+ 2x^2 z^2 - x^2yz +xz^3 -2xyz^2 ] V_{1,1,0}+\\
&[-x^2yz-2xyz^2]V_{1,1,1}+x^2yz\lambda_1 + 2xyz^2\lambda_1\\
V_{5,3,0}=&[x^2z^2+xz^3-xyz^2] V_{1,1,0} + [-xyz^2]V_{1,1,1 }+xyz^2\lambda_1\\
V_{5,4,0}=&[xz^3] V_{1,1,0}\\
V_{5,5,0}=&[z^4 -4yz^3] V_{1,1,0} + [-4yz^3]V_{1,1,1} + 4yz^3 \lambda_1\\ 
V_{5,5,1}=&[yz^3 -3y^2z^2 +yz^2\bar{\lambda}_1 -2y^2z \bar{\lambda}_1 +\bar{\lambda}_1^2 yz -\bar{\lambda}_1^2 y^2 + \bar{\lambda}_1^3 y ]\\ &V_{1,1,0}+ [-3y^2z^2 -2y^2z \bar{\lambda}_1 -\bar{\lambda}_1^2 y^2 +  \bar{\lambda}_1^4 ]V_{1,1,1}\\
&+ 3y^2z^2 \lambda_1+2 y^2 z  \lambda_1 \bar{\lambda}_1 + y^2 \bar{\lambda}_1^2 \lambda_1\\ \numberthis \label{StateEquations_AoAI}
\end{align*}
}

The summation of all the states should be equal to $1$
\begin{align*}
&[\frac{y+\lambda_1z}{(1-z)\lambda_1}  - \frac{y^2}{(1-z)^2 \lambda_1} + \frac{x-y}{(1-z)(1-x-z)} ] V_{1,1,0}\\
+&[    \frac{\bar{\lambda}_1}{\lambda_1}  - \frac{y^2}{(1-z)^2 \lambda_1}   -\frac{y}{(1-z)(1-x-z)}]V_{1,1,1}\\
+&[    \frac{ y^2}{(1-z)^2}   +    \frac{\lambda_1y}{(1-z)(1-x-z)}]=1 \numberthis \label{FirstRelationBetweenTheVsForAoAI}
\end{align*}
This is one relation between $V_{1,10}$ and $V_{1,1,1}$ and the other relation can be obtained from (\ref{v_111}):
{\small
\begin{align*}
&\left(\frac{wy(1-z)-wy^2}{(1-z)^2\lambda_1}\right)\times V_{1,1,0}+ \left(\frac{w\bar{\lambda}_1}{\lambda_1} - \frac{wy^2}{(1-z)^2 \lambda_1}+w-1 \right) \times\\
&V_{1,1,1} =- \frac{w\lambda_1 y^2}{(1-z)^2\lambda_1}. \numberthis \label{SecondRelationBetweenTheVsForAoAI}
\end{align*}}
Having these two relations, we obtain 
\begin{equation} \label{v_110_}
V_{1,1,0}=\frac{\lambda_1 (\lambda_1^2  \bar{\lambda}_2 +\lambda_2) \bar{\lambda}_2 \lambda_2}{
\lambda_1^2 \bar{\lambda}_2^2 + \lambda_2^2 + \lambda_1 \lambda_2(1 - 2 \lambda_2)},
\end{equation}
and
\begin{equation} \label{v_111_}
V_{1,1,1}= \frac{\bar{\lambda}_1 \lambda_1 \lambda_2^3}{
\lambda_1^2 \bar{\lambda}_2^2 + \lambda_2^2 + \lambda_1 \lambda_2(1 - 2 \lambda_2)}.
\end{equation}
If we multiply the states equations (\ref{StateEquations_AoAI}) by their AoAI, i.e. multiply $V_{AI,I,B}$ by $AI$, and sum them together, we get the average:

$\overline{AI}=[\frac{x - y}{(1 - x - z)^2} + \frac{y + 1}{(1 - z)^2} + \frac{y}{(1 - z) \lambda_1^2} + \frac{y}{(1 - z)^2  \lambda_1} - \frac{y^2}{(1 - z)^2  \lambda_1^2} - \frac{2  y^2}{(1 - z)^3 \lambda_1} + \frac{x}{(1 - z)^2  (1 - x - z)} + \frac{z  (x - y)}{(1 - z)  (1 - x - z)^2} - \frac{x  y}{(1 - x - z)  (1 - z)^3} + \frac{y  (z - 2)}{(1 - z)^3}]V_{1,1,0}
+[- \frac{y}{(1 - x - z)^2} + \frac{y}{(1 - z)^2} - \frac{y^2}{(1 - z)^2 \lambda_1^2} - \frac{2 y^2}{(1 - z)^3 \lambda_1} + \frac{1}{\lambda_1^2} - \frac{x y}{(1 - x - z) (1 - z)^3} + \frac{y (z - 2)}{(1 - z)^3} - \frac{y z}{(1 - x - z)^2 (1 - z)}]V_{1,1,1}
+[\frac{\lambda_1 y}{(1 - x - z)^2} - \frac{\lambda_1 y}{(1 - z)^2} + \frac{y (\lambda_1 (2 - z) + 2 y)}{(1 - z)^3} + \frac{y^2}{(1 - z)^2 \lambda_1} + \frac{\lambda_1 x y}{(1 - x - z) (1 - z)^3} + \frac{\lambda_1 y z}{(1 - x - z)^2 (1 - z)]}
$\\

If we replace (\ref{v_110_}) and (\ref{v_111_}) and also $w$, $x$, $y$, and $z$, and simplify, we obtain the average AoAI (\ref{Average_AoAI}).

\end{appendices}

\bibliographystyle{ieeetr}
\bibliography{bibliography.bib}

\end{document}